\pgfplotsset{compat=newest} 
\pgfplotsset{plot coordinates/math parser=false}
\pgfplotsset{compat=newest} 
\pgfplotsset{plot coordinates/math parser=false}
\pgfplotsset{compat=newest} 
\pgfplotsset{plot coordinates/math parser=false}
\newtheorem{theorem}{Theorem}
\newtheorem{corollary}{Corollary}
\def\CN{\mathcal{C}\mathcal{N}} 
\begin{document}
	\title{Average Rate Analysis of RIS-aided Short Packet Communication in URLLC Systems}

	\author{Ramin Hashemi}
	\author{Samad Ali}
    \author{Nurul Huda Mahmood}
	\author{Matti Latva-aho.}
	\affil{Centre for Wireless Communications (CWC), University of Oulu, Oulu, Finland, \authorcr Emails: {\{ramin.hashemi, samad.ali, nurulhuda.mahmood,	matti.latva-aho\}@oulu.fi}}

	\maketitle
	
	\begin{abstract}
		In this paper, the average achievable rate of a re-configurable intelligent surface (RIS) aided factory automation is investigated in finite blocklength (FBL) regime. First, the composite channel containing the direct path plus the product of reflected paths through the RIS is characterized. Then, the distribution of the received signal-to-noise ratio (SNR) is matched to a Gamma random variable whose parameters depend on the total number of RIS elements as well as the channel pathloss. Next, by assuming FBL model, the achievable rate expression is identified and the corresponding average rate is elaborated based on the proposed SNR distribution. The phase error due to quantizing the phase shifts is considered in the simulation.  The numerical results show that Monte Carlo simulations conform to the matched Gamma distribution for the received SNR for large number of RIS elements. In addition, the system reliability indicated by the tightness of the SNR distribution increases when RIS is leveraged particularly when only the reflected channel exists. This highlights the  advantages of RIS-aided communications for ultra-reliable low-latency communications (URLLC) systems. The reduction of average achievable rate due to working in FBL regime with respect to Shannon capacity is also investigated as a function of total RIS elements.
	\end{abstract}
	
	\begin{IEEEkeywords}
		Average achievable rate, finite blocklength (FBL), factory automation, re-configurable intelligent surface (RIS), ultra-reliable low-latency communications (URLLC).  
	\end{IEEEkeywords}
	
	\IEEEpeerreviewmaketitle

	\section{Introduction}
	\bstctlcite{IEEEexample:BSTcontrol}
	 In the sixth generation (6G) wireless networks \cite{MTCwhitePaper2020}, ultra-reliable low-latency communications (URLLC) \cite{Popovski2014,Popovski2019} will play an essential and inevitable role as the enabler of a wide range of applications such as industrial automation and e-health. The requirements are  stringent end-to-end delay and reliability of the networks. In URLLC systems, the high reliability means bit error rates (BER) as low as $10^{-9}$ and low latency is referred to the delay of less than 1 ms. URLLC messages usually carry control information, hence the packet lengths are generally ultra-short. As a result, the block length of the channel is short which necessitates a thorough analysis of achievable rate and decoding error probability as investigated in \cite{Polyanskiy2010,Yang2014}. The importance of URLLC is further highlighted in mission critical factory automation environments \cite{Luvisotto2017a}. 
	 \textcolor{black}{The three use-cases of URLLC including factory automation is introduced in \cite{Chen2018b}. In factory automation the design factors such as latency reduction techniques in the radio interface signaling should be well discussed to avoid link blockage. The requirements of Internet of things (IoT) applications including factory automation in terms of latency and reliability is discussed in \cite{Schulz2017}.} However, URLLC transmission demands are not met entirely as the main challenge to ensure high reliability is the random nature of the propagation channel due to multipath fading.  \\ 
    
    Recently, re-configurable intelligent surface (RIS) technology  \cite{DiRenzo2020,Wu2020b} is introduced that improves the spectral efficiency and coverage of wireless communication systems by influencing the propagation environment. The structure of an RIS is composed of a metasurface where a programmable controller configures and adjusts phase and/or amplitude response of the metasurface to modify the reflection of an incident electromagnetic wave towards the receiver. The aim of this operation is that the  received signals are added together coherently  so that the system performance improves in terms of enhancing e.g. the signal-to-noise ratio (SNR). Based on leveraging passive or active elements at each phase shifter, RISs are classified into passive and active devices, respectively.  Nevertheless, there are a number of challenges in RIS-aided systems such as acquiring the composite channel state information (CSI), and having finite resolution of the phase controller of the RIS which is due to limited quantization levels \cite{Wu2020b,Li2020}. Therefore, analyzing the statistical characteristics of RIS-empowered communications is of paramount importance. \\

    A number of studies have investigated the ergodic capacity or outage probability analysis of RIS-aided systems by identifying the characteristics of the channel response and received SNR \cite{Badiu2020,VanChien2020,Hou2020a,Qian2020}. In \cite{Badiu2020,VanChien2020} the SNR distribution is approximated as a Gamma random variable (RV), and the ergodic capacity is studied in an infinite blocklength channel \cite{VanChien2020} in which the composite channel contains the direct link plus the reflected signal from RIS with arbitrary phase shifts and as a result only statistical properties of the phase shifts was taken into account. 
    The best case and worst case channel responses are formulated as a Gamma RV with separate scale and shape parameters for each case in \cite{Hou2020a}. The authors in \cite{Qian2020} considered the optimal SNR derived in \cite{Zappone2020} and then, they proposed that the SNR distribution is composed of the product of three independent Gamma RVs and sum of two scaled non-central chi-square RVs based on the eigenvalues of the channel matrices of RIS-access point (AP) and RIS-user. The authors compare the proposed analytical distributions with the case that the SNR is only approximated with one gamma RV. The numerical results show that there is negligible difference in considering Gamma distribution for SNR compared with the most-exact analytical distributions. Furthermore, to evaluate the average rate, it is intractable to perform the expectations concerning SNR distribution when a complex expression is considered. Therefore, assuming the received SNR as a Gamma RV is tractable and accurate. \\
	

    On the other side, a number of studies investigate the performance analysis of URLLC systems in finite blocklength (FBL) regime such as  \cite{Li2018,Ren2020b,Tran2020}. The authors in \cite{Li2018} analyzed the ergodic achievable data rate at FBL channel model. Then, the optimal number of training symbols is studied based on the average data rate expression. In  \cite{Ren2020b} the authors proposed to employ massive multiple-input multiple-output (MIMO) systems to leverage in industrial IoT networks to reduce the latency. The lower bound achievable uplink ergodic rates of massive MIMO  system with FBL codes is analyzed by convexifying the rate formula which holds under specific conditions.
    In \cite{Tran2020} the downlink MIMO NOMA systems performance under Nakagami-m fading model, and average error probability is investigated in FBL regime. It should be noted that in \cite{Tran2020} the ergodic capacity analysis is not addressed and the error probability analysis is studied based on a well-known linear approximation for the Q-function. \\

    Even though the aforementioned work cover the topics of RIS and short-packet communication, to the best of our knowledge, there is no previous reports on the average achievable rate analysis of an RIS-aided transmission in a FBL scenario. This motivates us to shed some light on the average achievable analysis of RIS-aided factory automation wireless transmission under Rayleigh fading channel model which is assumed to be quasi-static during each transmission because of low velocity among APs and actuators (ACs). The presented analysis gives us design guidelines regarding the required number of RIS elements to ensure a target SNR or average achievable rate. Additionally, we show that leveraging RIS in industrial environments can greatly enhance the system reliability.\\

	In this paper, $\textbf{h} \sim \CN(\textbf{0}_{N\times 1},\textbf{C}_{N\times N})$ denotes circularly-symmetric (central) complex normal distribution vector with zero mean $\textbf{0}_{N\times 1}$ and covariance matrix $\textbf{C}$. The operators $\mathbb{E}[.]$ and $\mathbb{V}[.]$ denote the statistical expectation and variance, respectively. Also, $X\sim \Gamma(a,b)$ denotes Gamma random variable with scale and shape parameters $a$ and $b$, respectively.

	The structure of this paper is organized as follows. In Section II, the systems model and mathematical identification of the received SNR and its distribution is presented. In Section III the derivation of average rate is proposed. The numerical results are presented in Section IV. Finally, Section V concludes the paper.
	
	\section{System Model}
    Consider the downlink of an RIS-aided network consisting of a single antenna AP and AC where the RIS has $N = N_1 \times N_2$ elements. The channel response between the AP and AC have a direct component plus a reflected channel from the RIS. Let us denote the direct channel as $h_{\text{AC}}^{\text{AP}} \sim \CN(0,\eta^{\text{AP} \rightarrow \text{AC}})$ where $\eta^{\text{AP} \rightarrow \text{AC}}$ denotes the path loss attenuation due to large scale fading. $\textbf{h}_{\text{RIS}}^{\text{AP}} \in \mathbb{C}^{N \times 1}$ and $\textbf{h}_{\text{AC}}^{\text{RIS}} \in \mathbb{C}^{N \times 1}$ represent the vector channels from the AP to the RIS and from the RIS to the AC, respectively. The channel vector  $\textbf{h}_{\text{RIS}}^{\text{AP}}$ is distributed as $\CN(\textbf{0}_{N \times 1},\boldsymbol{\eta}^{\text{AP}\rightarrow\text{RIS}}_{N \times N})$  where $\boldsymbol{\eta}^{\text{AP}\rightarrow\text{RIS}}=\text{diag}(\eta^{\text{AP}\rightarrow\text{RIS}}_1,...,\eta^{\text{AP}\rightarrow\text{RIS}}_N)$ is a diagonal matrix including the path loss coefficients from the AP to the RIS elements. Similarly, the channel between the RIS and AC is distributed as $\textbf{h}_{\text{AC}}^{\text{RIS}} \sim \CN(\textbf{0}_{N \times 1},\boldsymbol{\eta}^{\text{RIS}\rightarrow\text{AC}}_{N \times N}) $ where  $\boldsymbol{\eta}^{\text{RIS}\rightarrow\text{AC}}=\text{diag}(\eta^{\text{RIS}\rightarrow\text{AC}}_1,...,\eta^{\text{RIS}\rightarrow\text{AC}}_N)$ denotes the covariance matrix in this case. In factory automation environments each actuator is almost in a fixed location. Therefore, the quasi-static channel fading model can be applied here. We assume that there is no interference for simplicity and leave the complex scenarios to future work.
	\begin{figure}
		\centering
		\includegraphics[scale=0.4]{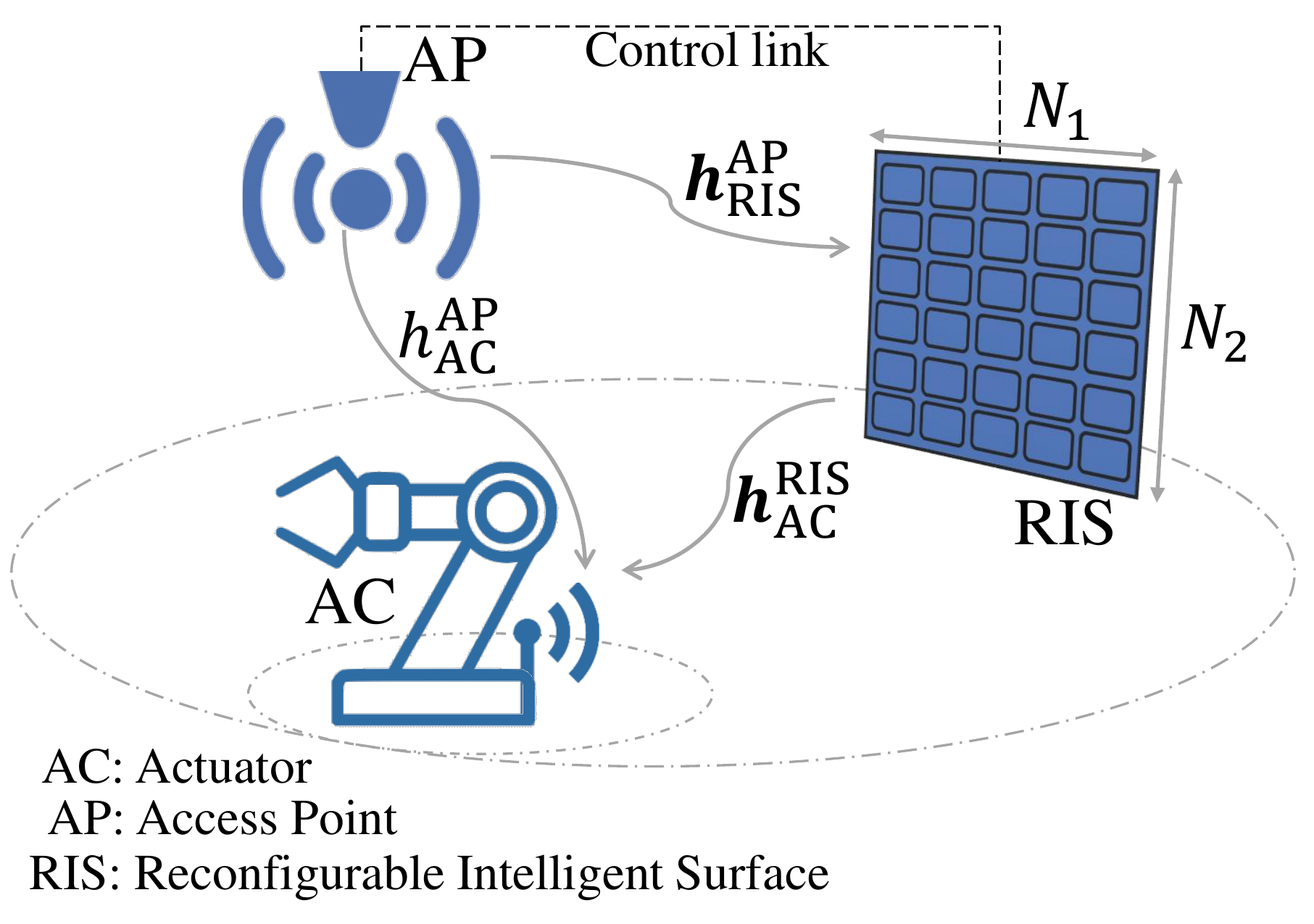}
		\caption{The system model.}
		\label{fig:2}
    \end{figure}
	
	The received signal at the AC is given by
	\begin{flalign}
	    y(t) = \left(h_{\text{AC}}^{\text{AP}} + {\textbf{h}_{\text{AC}}^{\text{RIS}}}^H\boldsymbol{\Theta}{\textbf{h}_{\text{RIS}}^{\text{AP}}}\right) s(t) +  n(t),
	\end{flalign}
	where $s(t)$ is the transmitted symbol from the AP with  $\mathbb{E}[|s(t)|^2] = p$ in which $p$ is the transmit power, and $n(t)$ is the additive white Gaussian noise with $\mathbb{E}[|n(t)|^2] = N_0 W$ where $N_0$, $W$ are the noise spectral density and the system bandwidth, respectively. The complex reconfiguration matrix $\boldsymbol{\Theta}_{N \times N}$ indicates the phase shift and the amplitude attenuation of RIS which is defined as 
	\begin{flalign}
	    \boldsymbol{\Theta}_{N \times N} = & \text{diag}(\beta_1 e^{j\theta_1},\beta_2 e^{j\theta_2},...,\beta_N e^{j\theta_N}), \nonumber \\ 
	    \beta_n \in & [0,1], \quad \forall n \in \mathcal{N} \\
	    \theta_n \in & [-\pi,\pi), \quad \forall n \in \mathcal{N} \nonumber 
	\end{flalign}
	where $\mathcal{N} = \{1,2,...,N\}$. Note that in our model we have assumed that the RIS elements have no coupling between them and there is no joint processing among elements. Hence, the phase shifts and amplitude control are done independently. In addition, we assume the phase shifts are performed without error and the received signals are coherently added at the receiver.

	Based on the received signal at AC and denoting the number of information bits $L$\footnote{It should be noted that $L$ that is the size of packets is assumed the same for actuator and access point.} that can be transmitted with target error probability $\varepsilon$ in $r$ channel uses ($r \geq 100$) the maximal achievable rate over a quasi-static additive white gaussian channel (AWGN) is given by \cite{Polyanskiy2010}
	\begin{flalign}
        R^{*}(\gamma,L,\varepsilon)=\frac{L}{r} =  \text{C}(\gamma) - Q^{-1}(\varepsilon)\sqrt{ \frac{\text{V}(\gamma)}{r}} + \mathcal{O}\left(\frac{\log_2(r)}{r}\right),
         \label{achievable_rate_urllc}
    \end{flalign}
    where $\text{C}(\gamma) = \log_2(1+\gamma)$ is the Shannon capacity formula under infinite blocklength assumption. The dispersion of the channel is defined as  $\text{V}(\gamma) = (\log_2(e))^2 \big( 1- \frac{1}{(1+\gamma)^2} \big)$. Note that $Q^{-1}(.)$ is the inverse of Q-function which is defined as $Q(x) = \frac{1}{\sqrt{2\pi}}\int_{x}^{\infty}e^{-\nu^2/2}d\nu$ and
    \begin{flalign}
    \gamma = \rho\left|h_{\text{AC}}^{\text{AP}} + {\textbf{h}_{\text{AC}}^{\text{RIS}}}^H\boldsymbol{\Theta}{\textbf{h}_{\text{RIS}}^{\text{AP}}}\right|^2,
    \end{flalign}
    where $\rho = \frac{p}{N_0W}$ denotes the instantaneous SNR. Note that the term $\mathcal{O}\big(\frac{\log_2(r)}{r}\big)$ in \eqref{achievable_rate_urllc} is neglected throughout this paper as it is approximately zero for $r \geq 100$ channel use. It is observed from \eqref{achievable_rate_urllc} when the blocklength approaches infinity the rate will be 
	\begin{flalign}
	    \lim_{r \rightarrow \infty} R^{*}(\gamma,L,\varepsilon) = \log_2\left(1+\rho\left|h^{\text{AP}}_{\text{AC}} + {\textbf{h}_{\text{AC}}^{\text{RIS}}}^H\boldsymbol{\Theta}{\textbf{h}_{\text{RIS}}^{\text{AP}}}\right|^2\right),
	\end{flalign}
	which is the conventional Shannon capacity formula.

	To determine the average achievable rate, we need to identify the distribution of $\gamma$. In the following, we present the related theorems and derive a closed-form and tractable approximation for the average rate.
	
	\begin{theorem}[SNR distribution]
	\label{SNR_distribution_theorem}
    Let $X = \left|h^{\text{AP}}_{\text{AC}} + {\textbf{h}_{\text{AC}}^{\text{RIS}}}^H\boldsymbol{\Theta}{\textbf{h}_{\text{RIS}}^{\text{AP}}}\right|^2$ and given $N>>1$, the distribution of $X$ is approximately matched to a Gamma random variable with the following parameters \cite{VanChien2020,Hou2020a}
    \begin{flalign}
       X \sim \Gamma(\alpha',\beta'), 
    \end{flalign}
    where $\alpha$ and $\beta$ are given in terms of first and second order moment of $X$ as 
    \begin{flalign}
    \label{alpha}
       \alpha' = \frac{(\mathbb{E}[X])^2}{\mathbb{E}[X^2]-(\mathbb{E}[X])^2},   \\  
       \beta' = \frac{\mathbb{E}[X]}{\mathbb{E}[X^2]-(\mathbb{E}[X])^2},
    \end{flalign}
    where $\mathbb{E}[X]$ and $\mathbb{E}[X^2]$ are given in \eqref{appndx_b_eq1} and \eqref{appndx_3}. For SNR distribution we have  $\gamma = \rho X$. Therefore,   $\mathbb{E}[\gamma] = \rho \mathbb{E}[X]$ and $\mathbb{E}[\gamma^2] = \rho^2\mathbb{E}[X^2]$
    which implies that $\gamma \sim \Gamma(\alpha,\beta)$ with the same $\alpha $ as in \eqref{alpha} and $\beta = \frac{\beta'}{\rho}$.
    \end{theorem}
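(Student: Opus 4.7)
The plan is to apply the method of moments: choose a Gamma distribution whose first two moments agree with those of $X$, with the Gamma family itself justified by a central limit argument that applies for $N\gg 1$.

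For the distributional ansatz, I would decompose the composite channel as $h = h_d + h_r$ with $h_d = h_{\text{AC}}^{\text{AP}}$ and $h_r = (\textbf{h}_{\text{AC}}^{\text{RIS}})^H \boldsymbol{\Theta}\textbf{h}_{\text{RIS}}^{\text{AP}} = \sum_{n=1}^N \beta_n e^{j\theta_n} (h_{\text{AC},n}^{\text{RIS}})^{*} h_{\text{RIS},n}^{\text{AP}}$. The reflected term $h_r$ is a sum of $N$ independent zero-mean contributions, so for large $N$ the CLT renders $h_r$ approximately circularly-symmetric complex Gaussian; since $h_d$ is independent of $h_r$ and also complex Gaussian, $h$ is approximately complex Gaussian and $X=|h|^2$ is approximately Gamma (reducing to the exponential, i.e.\ $\Gamma(1,\cdot)$, in the strict Gaussian limit). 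Permitting both shape and rate to vary gives enough flexibility to absorb the residual non-Gaussianity at moderate $N$ while keeping the moment-matching tractable, and this is the standard argument used in the cited prior work.

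The computations that follow are mechanical once the structure is set up. Writing $X = |h_d|^2 + 2\,\text{Re}(h_d^{*} h_r) + |h_r|^2$, the first moment decomposes as $\mathbb{E}[X] = \mathbb{E}[|h_d|^2] + \mathbb{E}[|h_r|^2]$ after the cross term vanishes by $\mathbb{E}[h_d]=0$ and independence; expanding the cascaded piece and using independence across RIS elements collapses it to a single sum over $n$ of $\beta_n^{2}\eta_n^{\text{AP}\rightarrow\text{RIS}}\eta_n^{\text{RIS}\rightarrow\text{AC}}$. Squaring $X$ and taking expectation is where the bookkeeping gets heavier, and I anticipate this as the main obstacle: evaluating $\mathbb{E}[|h_r|^4]$ is a fourfold sum over RIS indices in which the random phases $e^{j\theta_n}$ must pair up one-for-one with their conjugates for a summand to survive, leaving a diagonal contribution involving $\mathbb{E}[|h_{\text{RIS},n}^{\text{AP}}|^{4}|h_{\text{AC},n}^{\text{RIS}}|^{4}]$ via the complex Gaussian fourth-moment identity, plus off-diagonal pair contributions scaling with $(\sum_n \beta_n^{2}\eta_n^{\text{AP}\rightarrow\text{RIS}}\eta_n^{\text{RIS}\rightarrow\text{AC}})^{2}$. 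Combining this with $\mathbb{E}[|h_d|^{4}] = 2(\eta^{\text{AP}\rightarrow\text{AC}})^{2}$ and the cross moment $\mathbb{E}[(2\,\text{Re}(h_d^{*}h_r))^{2}] = 2\,\mathbb{E}[|h_d|^{2}]\,\mathbb{E}[|h_r|^{2}]$ produces the closed-form $\mathbb{E}[X^{2}]$ referenced by \eqref{appndx_3}.

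To finish, I would substitute the two moments into the Gamma moment-matching identities. Since a $\Gamma(\alpha',\beta')$ variable satisfies $\mathbb{E}[X] = \alpha'/\beta'$ and $\mathbb{V}[X] = \alpha'/\beta'^{\,2}$, solving this pair of equations for $\alpha'$ and $\beta'$ immediately yields the formulas in \eqref{alpha}. The statement for $\gamma = \rho X$ then follows from the scaling property of the Gamma family: multiplication by a positive constant $\rho$ preserves the shape and divides the rate parameter by $\rho$, so $\gamma\sim\Gamma(\alpha',\beta'/\rho)$, matching $\alpha=\alpha'$ and $\beta=\beta'/\rho$ as claimed.
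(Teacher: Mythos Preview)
Your moment-matching and Gamma-scaling pieces are fine, but the core computation of the moments is carried out under the wrong assumption on the RIS phases, and this is a genuine gap. You treat $h_r=\sum_{n}\beta_n e^{j\theta_n}(h_{\text{AC},n}^{\text{RIS}})^{*}h_{\text{RIS},n}^{\text{AP}}$ as a sum of \emph{independent zero-mean} terms, invoke the CLT to make $h_r$ circularly symmetric Gaussian, and then argue that cross terms like $2\,\mathrm{Re}(h_d^{*}h_r)$ vanish in expectation and that in $\mathbb{E}[|h_r|^{4}]$ the phases must ``pair up'' to survive. The paper, however, computes the moments \emph{under perfect phase alignment}: the $\theta_n$ are chosen so that every reflected contribution is co-phased with $h_d$, which makes the effective amplitude a sum of nonnegative Rayleigh magnitudes,
\[
\sqrt{X}\;=\;|h_d|+\sum_{n=1}^{N}\bigl|[\textbf{h}_{\text{AC}}^{\text{RIS}}]_n\bigr|\,\bigl|[\textbf{h}_{\text{RIS}}^{\text{AP}}]_n\bigr|.
\]
In this regime $h_r$ is not zero-mean, the cross term $2|h_d|\sum_n Z_n$ does \emph{not} vanish, and the off-diagonal pieces of $(\sum_n Z_n)^2$ contribute $\tfrac{\pi^2}{16}\varrho\vartheta$ per pair via $\mathbb{E}[Z_n]=\tfrac{\pi}{4}\sqrt{\varrho\vartheta}$. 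That is precisely where the $\tfrac{\pi^2 N(N-1)}{16}\varrho\vartheta$ and $\sqrt{\varsigma\varrho\vartheta}$ terms in \eqref{appndx_b_eq1}--\eqref{appndx_3} come from; your approach would produce $\mathbb{E}[X]=\varsigma+N\varrho\vartheta$ with no $\pi$-factors and no $N(N-1)$ scaling, which does not match the theorem.

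The fix is to redo the moment calculations for the coherently combined amplitude above: use the Rayleigh first moments $\mathbb{E}[|h_d|]=\sqrt{\pi\varsigma}/2$ and $\mathbb{E}[|[\textbf{h}]_n|]=\sqrt{\pi\eta_n}/2$, expand $(\,|h_d|+\sum_n Z_n\,)^2$ and $(\,|h_d|+\sum_n Z_n\,)^4$, and collect diagonal versus off-diagonal index patterns using independence across $n$. The CLT heuristic should likewise be restated for a sum of independent \emph{positive} variables (yielding an approximately Gaussian amplitude with large mean, whose square is well captured by a Gamma with shape $\alpha'\neq 1$), not for a zero-mean complex sum.
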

    \begin{proof}
          Let us denote $\mathbb{E}[|h_{\text{AC}}^{\text{AP}}|^2]=\eta^{\text{AP} \rightarrow \text{AC}}=\varsigma $, $ \mathbb{E}[|[\textbf{h}_{\text{RIS}}^{\text{AP}}]_n|^2]=\eta^{\text{AP} \rightarrow \text{RIS}}=\varrho$ and $ \mathbb{E}[|[\textbf{h}_{\text{AC}}^{\text{RIS}}]_n|^2]= \eta^{\text{RIS} \rightarrow \text{AC}}=\vartheta$. When the phase adjustment is perfectly done at the RIS by employing the statistical analysis we can easily compute
    \begin{flalign}
        \mathbb{E}&[X] =  \varsigma + N \varrho\vartheta +  \frac{\pi^2N(N-1)}{16} \varrho\vartheta  + \frac{\pi N}{4}\sqrt{\varsigma\varrho\vartheta}, \label{appndx_b_eq1}\\ 
        \mathbb{E}&[X^2]  =  2\varsigma^2 + \varsigma\varrho\vartheta N\big(6  + \frac{3(N-1)\pi^2}{8} \big) \nonumber \\ 
        &  +  \frac{3N\pi^{1.5}}{4}\sqrt{\varsigma^3\varrho\vartheta} +
        \frac{\varrho^2\vartheta^2N}{256} \Big(\pi ^4 (N-3) (N-2) (N-1) \nonumber \\ 
        & + 48 \pi ^2 (2 N-1) (N-1)+768 N + 256 \Big) 
        \nonumber \\ 
        &
        + \sqrt{\varsigma\varrho^3\vartheta^3}\frac{N \pi^{1.5} }{32} \left( \pi ^2 (N-2) (N-1)+48 N-12 \right).
        \label{appndx_3}
    \end{flalign}
    therefore, the mean and variance of $\gamma = \rho X$ will be evaluated straightforwardly and accordingly the parameters of Gamma RV will be obtained which completes the proof.
    \end{proof}

    \section{Average rate analysis}
    In Theorem \ref{SNR_distribution_theorem}, we have modeled the SNR distribution, and the related Gamma distribution parameters $\alpha$ and $\beta$ are obtained. Next, to compute the average achievable rate, the instantaneous achievable rate should be averaged over the SNR distribution which we investigate in the next theorem. 
    \begin{theorem}
    \label{ergodic_rate_theo}
        The exact average achievable rate of the actuator in the RIS-aided FBL channel model given the distribution of SNR $\gamma \sim \Gamma(\alpha,\beta)$ is expressed as 
        \begin{flalign}
           \bar{R} = & r_1 -  \frac{Q^{-1}(\varepsilon)}{\sqrt{r}}r_2  \nonumber \\
           = &\frac{\beta^{\alpha}}{\Gamma(\alpha)\ln2} \sum_{k=1}^{\infty}\frac{1}{k}\Gamma(k+\alpha) \textbf{U}(k+\alpha,1+\alpha,\beta) \nonumber \\  & - \frac{Q^{-1}(\varepsilon)\beta^{\alpha}}{\sqrt{r}\ln2}\sum_{k=0}^{\infty}\binom{\frac{1}{2}}{k}(-1)^k \textbf{U}(\alpha,1-2k+\alpha,\beta),
        \end{flalign}
    where $r_1$ and $r_2$ are given in \eqref{r_1_summation_form} and \eqref{r2_expectation_binomial_series3}, respectively and $\textbf{U}(a,b,z)=\frac{1}{\Gamma(a)}\int_{0}^{\infty}(1+u)^{b-a-1}u^{a-1}e^{-zu}\,du$ denotes the confluent hypergeometric Kummer U function  \cite[Eq. (9.211)]{gradshteyn2014table}, and $\Gamma(\alpha) = \int_0^\infty y^{\alpha-1}e^{-y}\,dy$.
     \end{theorem}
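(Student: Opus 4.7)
The plan is to compute the two expectations appearing in the finite-blocklength rate formula \eqref{achievable_rate_urllc} separately. Since $r$ and $\varepsilon$ are deterministic, linearity of expectation immediately gives $\bar{R} = \mathbb{E}[\mathrm{C}(\gamma)] - \frac{Q^{-1}(\varepsilon)}{\sqrt{r}}\,\mathbb{E}\!\left[\sqrt{\mathrm{V}(\gamma)}\right]$, so the two summands in the statement are $r_1 = \mathbb{E}[\log_2(1+\gamma)]$ and $r_2 = \mathbb{E}[\sqrt{\mathrm{V}(\gamma)}]$, both of which I then integrate against the Gamma density $f_\gamma(x) = \frac{\beta^{\alpha}}{\Gamma(\alpha)} x^{\alpha-1} e^{-\beta x}$ supplied by Theorem~\ref{SNR_distribution_theorem}.

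For $r_1$, the key trick is to sidestep the Taylor series of $\ln(1+x)$, which diverges for $x>1$, by rewriting $\ln(1+x) = -\ln\!\bigl(1 - \tfrac{x}{1+x}\bigr) = \sum_{k=1}^{\infty}\frac{1}{k}\bigl(\tfrac{x}{1+x}\bigr)^k$, a representation valid for every $x>0$ because $\tfrac{x}{1+x}\in(0,1)$. Swapping sum and integral reduces $r_1$ to a series of integrals of the form $\int_0^\infty x^{k+\alpha-1}(1+x)^{-k} e^{-\beta x}\,dx$, each of which matches the Kummer-$U$ integral representation with $a = k+\alpha$, $z = \beta$, and $b-a-1 = -k$, i.e.\ $b = \alpha+1$. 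Collecting the constants yields the stated expression for $r_1$.

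For $r_2$, I would first write $\sqrt{\mathrm{V}(\gamma)} = \log_2(e)\sqrt{1 - (1+\gamma)^{-2}}$ and then expand the square root via the generalized binomial series $(1-u)^{1/2} = \sum_{k=0}^{\infty}\binom{1/2}{k}(-1)^k u^k$ with $u = (1+x)^{-2}\in(0,1]$. After interchanging sum and integral, each term becomes $\int_0^\infty x^{\alpha-1}(1+x)^{-2k} e^{-\beta x}\,dx$, which is precisely the Kummer-$U$ integral with $a = \alpha$ and $b = 1-2k+\alpha$, evaluating to $\Gamma(\alpha)\,\textbf{U}(\alpha,1-2k+\alpha,\beta)$. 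Packaging the constants reproduces the claimed series for $r_2$.

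The main obstacle is rigorously justifying the two sum--integral interchanges, particularly for $r_2$, because the coefficients $\binom{1/2}{k}$ decay only like $k^{-3/2}$ and the binomial series is merely conditionally convergent at the boundary $u=1$, which corresponds to $x\to 0^{+}$ in the integrand. I expect this to be closed by dominated convergence on every interval $[\delta,\infty)$ (using the factor $e^{-\beta x}$ for decay at infinity), combined with a direct remainder bound near the origin where $\sqrt{\mathrm{V}(\gamma)}$ is uniformly bounded anyway; an Abel-summation argument would give the same conclusion. The interchange inside $r_1$ is easier since Tonelli applies directly thanks to the non-negativity of every term.
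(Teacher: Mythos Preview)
Your proposal is correct and follows essentially the same route as the paper: split $\bar R$ into $r_1$ and $r_2$ by linearity, expand $\ln(1+x)$ via the series $\sum_{k\ge 1}\tfrac{1}{k}\bigl(\tfrac{x}{1+x}\bigr)^k$ and $\sqrt{\mathrm V(\gamma)}$ via the generalized binomial series in $(1+\gamma)^{-2}$, and then recognize each resulting integral against the Gamma density as the Kummer $\textbf U$ integral. Your remarks on justifying the sum--integral interchanges (Tonelli for $r_1$, a more careful dominated-convergence/Abel argument for $r_2$) go beyond what the paper states explicitly, but do not change the approach.
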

     \begin{proof}
     The instantaneous rate is given by 
    	\begin{flalign}
            R^*(\gamma,L,\varepsilon) \approx \text{C}(\gamma) - Q^{-1}(\varepsilon)\sqrt{ \frac{\text{V}(\gamma)}{r}},
             \label{achievable_rate_urllc_v2}  
        \end{flalign}
        where $\mathcal{O}\big(\frac{\log_2(r)}{r}\big)$ is neglected as $r\geq 100$. To calculate the expected value of \eqref{achievable_rate_urllc_v2} in terms of the distribution of $\gamma$ we should compute the following 
        \begin{flalign}
            \bar{R} = \overset{r_1}{\overbrace{\mathbb{E}[\log_2(1+\gamma)]}} - \frac{Q^{-1}(\varepsilon)}{\sqrt{r}}\overset{r_2}{\overbrace{\mathbb{E}[\sqrt{\text{V}(\gamma)}] }} ,
            \label{expected_rate}
        \end{flalign}
        where we have used the linearity rule of the expectation. We investigate the two terms $r_1$ and $r_2$ involved in \eqref{expected_rate} separately. According to \eqref{expected_rate} $r_1$ is given by 
        \begin{flalign}
            r_1 = \mathbb{E}[\log_2(1+\gamma)] = \int_{0}^{\infty} \log_2(1+u)f_\gamma(u) \,du,
            \label{expectation_of_r1}
        \end{flalign}
        where $f_\gamma(u) = \frac{\beta^{\alpha}u^{\alpha-1}e^{-\beta u}}{\Gamma(\alpha)}$. To solve the integral, first consider the following series for natural logarithm \cite{gradshteyn2014table}
        \begin{flalign}
            \ln(1+x) = \sum_{k=1}^{\infty}\frac{1}{k}\left(\frac{x}{x+1}\right)^k, \quad x\geq 0
            \label{natural_logarithm_series}
        \end{flalign}
        then, by substituting in \eqref{expectation_of_r1} we will have
        \begin{flalign}
            & r_1 = \int_{0}^{\infty} \log_2(1+u)\frac{\beta^{\alpha}u^{\alpha-1}e^{-\beta u}}{\Gamma(\alpha)} \,du \nonumber \\ 
            = & \frac{1}{\ln2}\int_{0}^{\infty} \sum_{k=1}^{\infty}\frac{1}{k}\left(\frac{u}{u+1}\right)^k\frac{\beta^{\alpha}u^{\alpha-1}e^{-\beta u}}{\Gamma(\alpha)} \,du  \\
            \overset{a}{=} & \frac{\beta^{\alpha}}{\Gamma(\alpha)\ln2} \sum_{k=1}^{\infty}\frac{1}{k}\Gamma(k+\alpha) \int_{0}^{\infty}\frac{(1+u)^{-k}u^{\alpha+k-1}e^{-\beta u}}{\Gamma(k+\alpha)}\,du, \nonumber
        \end{flalign}
        where in $a$ the summation and the integral are exchanged and a constant factor of $\Gamma(k+\alpha)$ is multiplied in numerator and denominator. We observe that the integral inside the summation of the last step is defined as the confluent hypergeometric Kummer U function  $\textbf{U}(a,b,z)$ \cite[Eq. (9.211)]{gradshteyn2014table} therefore
        \begin{flalign}
            & r_1 = \frac{\beta^{\alpha}}{\Gamma(\alpha)\ln2} \sum_{k=1}^{\infty}\frac{1}{k}\Gamma(k+\alpha)\textbf{U}(k+\alpha,1+\alpha,\beta) , \label{r_1_summation_form}
        \end{flalign}
        where the summation can be truncated to a finite number in practical numerical situations. It should be noted that \eqref{r_1_summation_form} computes only the confluent hypergeometric function and, well-known gamma function at each iteration of the summation.

        Next, we evaluate the expectation associated with $r_2$ defined as
        \begin{flalign}
            r_2 = \mathbb{E}[\sqrt{\text{V}(\gamma)}] = \frac{1}{\ln2} \int_{0}^{\infty} \displaystyle \sqrt{1-\frac{1}{(1+u)^2}}f_\gamma(u) \,du,
            \label{expectation_of_r2}
        \end{flalign}
        to compute the above integral, we adopt the binomial expansion of channel dispersion which is given by
        \begin{flalign}
            \sqrt{\text{V}(\gamma)} = & \frac{1}{\ln2}\big( 1-\frac{1}{(1+\gamma)^2} \big)^{\frac{1}{2}} = \frac{1}{\ln2}\sum_{n=0}^{\infty}(-1)^n\binom{\frac{1}{2}}{n}(1+\gamma)^{-2n},\label{dispersion_binomial_series}
        \end{flalign}
        where $\binom{\frac{1}{2}}{n} = \frac{0.5(0.5-1)...(0.5-n+1)}{n!}$ for $n \neq 0$ and $\binom{\frac{1}{2}}{0} = 1$. Plugging \eqref{dispersion_binomial_series} in \eqref{expectation_of_r2} and substituting the definition of $f_\gamma(u)$ yields
        \begin{flalign}
            r_2 = & \frac{1}{\ln2}\int_{0}^{\infty} \sum_{n=0}^{\infty}(-1)^n\binom{\frac{1}{2}}{n}(1+u)^{-2n}f_\gamma(u) \,du, \nonumber \\ 
            = & \frac{\beta^{\alpha}}{\ln2}\sum_{n=0}^{\infty}\binom{\frac{1}{2}}{n}(-1)^n\textbf{U}(\alpha,1-2n+\alpha,\beta).
            \label{r2_expectation_binomial_series3}
        \end{flalign}
        consequently, if we substitute the mathematical expressions obtained in \eqref{r_1_summation_form} and  \eqref{r2_expectation_binomial_series3} in the average rate formula in \eqref{expected_rate}, the final result will be obtained which completes the proof.
     \end{proof}

       It is worth mentioning that the average rate given in Theorem \ref{ergodic_rate_theo} involves evaluating high-computational complexity functions as well as infinite summations which may not be beneficial in practical situations and resource allocation algorithms. Therefore, we propose a tractable lower bound approximation for the average rate in the following corollary. 
    
    \begin{corollary}
        A tractable and closed-form approximate lowerbound expression for the average rate $\bar{R}(L,\varepsilon)$ is given by
        \begin{flalign}
            \bar{R}_{\text{LB}}(L,\varepsilon) \approx & \Tilde{r}_1 - \frac{Q^{-1}(\varepsilon)}{\sqrt{r}}\Tilde{r}_2 =  \log_2\left(1+\frac{\alpha^2}{\beta(\alpha+1)}\right) \label{Corr_1} \\ \nonumber & -\frac{Q^{-1}(\varepsilon)}{2\ln2\sqrt{r}} \left(2-\beta +e^{\beta } \beta  (\alpha +\beta -1) \text{E}_{\alpha }(\beta )\right),
        \end{flalign}
        where $\text{E}_n(z)$ is the exponential integral function \cite[Eq. (8.211)]{gradshteyn2014table},  $\Tilde{r}_1$ and $\Tilde{r}_2$ are given in \eqref{Jensen_v2} and \eqref{expectation_of_r2_approximate_dispersion},  respectively.
    \end{corollary}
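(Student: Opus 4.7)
The plan is to decompose the average rate exactly as in the proof of Theorem~\ref{ergodic_rate_theo}, writing $\bar R_{\text{LB}} = \tilde r_1 - (Q^{-1}(\varepsilon)/\sqrt{r})\,\tilde r_2$ with $\tilde r_1 \approx \mathbb{E}[\log_2(1+\gamma)]$ and $\tilde r_2 \approx \mathbb{E}[\sqrt{V(\gamma)}]$, and to replace each of the two expectations by a closed-form surrogate that depends only on the low-order moments of $\gamma$ and a single exponential-integral call.

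For $\tilde r_1$, I would apply a moment-based Jensen-type lower bound tailored to Gamma random variables. The target argument rewrites as $\alpha^2/(\beta(\alpha+1)) = (\mathbb{E}[\gamma])^2/(\mathbb{E}[\gamma] + \mathbb{V}[\gamma]/\mathbb{E}[\gamma])$, which suggests invoking the bound $\mathbb{E}[\log_2(1+\gamma)] \geq \log_2\!\bigl(1 + (\mathbb{E}[\gamma])^2/(\mathbb{E}[\gamma]+\mathbb{V}[\gamma]/\mathbb{E}[\gamma])\bigr)$. Substituting the Gamma moments $\mathbb{E}[\gamma]=\alpha/\beta$ and $\mathbb{V}[\gamma]=\alpha/\beta^2$ (so that $\mathbb{V}[\gamma]/\mathbb{E}[\gamma] = 1/\beta$) collapses the effective-SNR argument to $\alpha^2/(\beta(\alpha+1))$, and the stated expression for $\tilde r_1$ in \eqref{Corr_1} drops out.

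For $\tilde r_2$, I would truncate the binomial series \eqref{dispersion_binomial_series} after its two leading terms ($n=0,1$), obtaining the pointwise approximation $\sqrt{V(\gamma)} \approx (1/\ln 2)\bigl(1 - \tfrac{1}{2}(1+\gamma)^{-2}\bigr)$. Taking expectations reduces everything to evaluating $\mathbb{E}[(1+\gamma)^{-2}]$ for $\gamma\sim\Gamma(\alpha,\beta)$ in closed form, which I would carry out in three steps: (a) integrate by parts once in $\int_0^\infty u^{\alpha-1}e^{-\beta u}(1+u)^{-2}\,du$ to lower the exponent of $(1+u)$ from $2$ to $1$, producing two sub-integrals of the form $\int_0^\infty u^{k-1}e^{-\beta u}(1+u)^{-1}\,du$ with $k=\alpha$ and $k=\alpha-1$; (b) rewrite each sub-integral via the identity $(1+u)^{-1} = \int_0^\infty e^{-s(1+u)}\,ds$, swap the integration order by Fubini, and perform the inner Gamma integral to get $\Gamma(k)\,e^{\beta}\beta^{1-k}\,\text{E}_k(\beta)$; and (c) apply the three-term recurrence $(n-1)\text{E}_n(z) = e^{-z} - z\,\text{E}_{n-1}(z)$ to eliminate $\text{E}_{\alpha-1}(\beta)$ in favour of $\text{E}_\alpha(\beta)$. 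The resulting identity $\mathbb{E}[(1+\gamma)^{-2}] = \beta - \beta\,e^\beta(\alpha+\beta-1)\text{E}_\alpha(\beta)$, substituted into the truncated approximation, yields the stated form for $\tilde r_2$.

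I expect step (c) of the $\tilde r_2$ computation to be the main obstacle: the by-parts reduction leaves a linear combination of $\text{E}_\alpha(\beta)$ and $\text{E}_{\alpha-1}(\beta)$ with different prefactors of $\beta$, and collapsing it into the single compact coefficient $(\alpha+\beta-1)\text{E}_\alpha(\beta)$ that appears in \eqref{Corr_1} is where the index-shift bookkeeping of the $\text{E}_n$ recurrence becomes delicate. The lower bound for $\tilde r_1$ is essentially mechanical once the correct moment ratio is recognised, and the final combination of $\tilde r_1$ and $\tilde r_2$ in the FBL rate expression is immediate.
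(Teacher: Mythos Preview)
Your proposal is correct and follows essentially the same route as the paper: Jensen-type lower bound plus second-moment correction for $\tilde r_1$, and two-term truncation of the binomial dispersion series followed by an integration-by-parts evaluation of $\mathbb{E}[(1+\gamma)^{-2}]$ for $\tilde r_2$. The paper makes the $\tilde r_1$ derivation slightly more explicit by separating it into the convexity step $\mathbb{E}[\log_2(1+\gamma)]\geq \log_2\!\bigl(1+1/\mathbb{E}[1/\gamma]\bigr)$ and the Taylor approximation $\mathbb{E}[1/\gamma]\approx \mathbb{E}[\gamma^2]/\mathbb{E}[\gamma]^3$, and it also records that the truncated dispersion dominates the exact one pointwise (so $\tilde r_2\geq r_2$), which is what justifies calling the combined expression a lower bound; you may want to state both of these explicitly, but otherwise your plan matches the paper and your detailed roadmap for collapsing the $\text{E}_{\alpha-1}$ term via the recurrence is exactly the ``some manipulations'' the paper leaves implicit.
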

    \begin{proof}
    First, we study the term involving Shannon capacity, i.e. $r_1=\mathbb{E}[\log_2(1+\gamma)]$. Invoking  Jensen's inequality we have $r_1 = \mathbb{E}[\log_2(1+\gamma)] \geq \log_2(1+\nicefrac{1}{\mathbb{E}\Big[\nicefrac{1}{\gamma}\Big]})$. Then, we apply Taylor series expansion of $\frac{1}{\gamma}$ and take average from both sides where $\mathbb{E}[\frac{1}{\gamma}] \approx \frac{1}{\mathbb{E}[\gamma]} + \frac{\mathbb{V}[\gamma]}{\mathbb{E}[\gamma]^3} = \frac{\mathbb{E}[\gamma^2]}{\mathbb{E}[\gamma]^3}$  yields
    \begin{flalign}
        r_1 = \mathbb{E}[\log_2(1+\gamma)] \geq \tilde{r}_1 = \log_2\left(1+\frac{\alpha^2}{\beta(\alpha+1)}\right),
        \label{Jensen_v2}
    \end{flalign}
    where $\mathbb{E}[\gamma]^3$ and $\mathbb{E}[\gamma^2]$ can be evaluated straightforward from the SNR distribution $\gamma \sim \Gamma(\alpha,\beta)$ such that $\mathbb{E}[\gamma] = \frac{\alpha}{\beta}$ and $\mathbb{E}[\gamma^2] = \frac{\alpha(\alpha+1)}{\beta^2}$ where $\alpha$ and $\beta$ are investigated in Theorem \ref{SNR_distribution_theorem}.

    Next, we investigate $r_2$ in \eqref{expectation_of_r2}. To do so, we apply the truncated binomial approximation $(1+x)^\vartheta \approx 1+\vartheta x$ for $|x| < 1$, $|\vartheta x| < 1$ to approximate the channel dispersion as $\sqrt{\text{V}(u)}=\frac{1}{\ln2}\sqrt{1-\frac{1}{(1+u)^2}} \approx \frac{1}{\ln2}(1-\frac{1}{2(1+u)^2})$. 
    By substituting the approximate dispersion expression in \eqref{expectation_of_r2} we will have
    \begin{flalign}
        \tilde{r}_2 = & \mathbb{E}[\sqrt{\text{V}(\gamma)}] \approx \frac{1}{\ln2} \int_{0}^{\infty} \displaystyle (1-\frac{1}{2(1+u)^2})f_\gamma(u) \,du\nonumber \\  
         \overset{a}{=} &   \frac{1}{2\ln2} \left(2-\beta +e^{\beta } \beta  (\alpha +\beta -1) \text{E}_{\alpha }(\beta )\right).
            \label{expectation_of_r2_approximate_dispersion}
    \end{flalign}
    where $a$ is obtained through integrating by part and some manipulations. Finally, by replacing \eqref{Jensen_v2} and \eqref{expectation_of_r2_approximate_dispersion} in \eqref{expected_rate} the approximate result will be obtained. Note that since $r_1$ is lower bounded by $\tilde{r}_1$ and $r_2$ achieves its upper bounded by $\tilde{r}_2$ the achievable rate will attain its lower bound because of negative sign in $r_2$ in \eqref{expected_rate}. 
    \end{proof}

    \section{Numerical Results and Validation}
	In what follows, we evaluate the proposed derivations and mathematical expressions numerically. Table \ref{table2} shows the considered chosen values for the parameters of the network. 
	\begin{table}
		\caption{Simulation parameters.}
		\centering
		\begin{tabular}{ l  l }
			\hline
			Parameter & Default value \\ \hline
			Channel blocklength $r$ & 100 \\ The size of packets $L$ & $80$ bits   \\ 
			RIS location in 2D plane & ($d$,10) m $d \in [5,95]$ \\ AP location & (0,0) m \\
			AC position & (100,0) \\ AP transmit power $p$  & 200 mW  \\ 
			Receiver noise figure (NF)  & 3  dB \\ Target error probability $\varepsilon$  & $10^{-9}$ \\
			Noise power density $N_0$ & -174 dBm/Hz \\ 
			Number of realizations & $10^4$ \\ Bandwidth $W$ &  200 kHz \\ 
			Path loss model ($\mathscr{D}$: distance) & $\text{PL(dB)}=34.53+38\log_{10}(\mathscr{D})$ \\ \hline
		\end{tabular}
		\label{table2}
	\end{table}
	In Fig. \ref{fig:SNR_CDFa} and Fig. \ref{fig:SNR_CDFb} the cumulative distribution function of the received SNR is illustrated for two cases namely, with direct channel between the AP and the AC and the case where there is no direct channel. The latter can be, for example, due to blockage by large objects. Besides, the results are observed when perfect phase alignment is done at the RIS which is referred to as $\phi_n=0, \forall n \in \mathcal{N}$ as well as a uniformly distributed phase noise at the RIS resulting from quantization error. In other words, the RIS chooses each phase shift from set $\theta_n \in \Theta = \big\{-\pi,-\pi+\Delta,-\pi+2\Delta,...,-\pi+(2^b-1)\Delta\big\}, \smallskip \forall n \in \mathcal{N}$ where $\Delta = \frac{\pi}{2^{b-1}}$ and $b$ is the number of quantization bits.

	As we observe, Monte Carlo simulations conform to matched Gamma distribution to SNR in Fig. \ref{fig:SNR_CDFa}. Furthermore, the curves show that when RIS is employed, the slope of the CDF curves is much higher in comparison to having direct link and the RIS phase is not adjusted which is illustrated as $*$ and $**$ in Fig. \ref{fig:SNR_CDFb}. This interprets that when we have RIS and only the reflected channel exists we can expect almost a fixed SNR during transmission. This highlights the importance of RIS in increasing the reliability. On the other hand, when a direct link exists or the RIS phase is not adjusted as illustrated in Fig. \ref{fig:SNR_CDFb} we can expect a wide range of SNR values, which erodes the perception of reliability.

	Furthermore, the impact of quantizer noise is shown in Fig. \ref{fig:SNR_CDFa} and Fig. \ref{fig:SNR_CDFb}.  We observe that in the presence of the direct channel there is negligible difference between leveraging a 1-bit quantizer with a 2-bit in Fig. \ref{fig:SNR_CDFb}. This is because the direct channel has significantly higher power compared to the reflected path signal so that the effect of reflected channel is not dominant. On the other hand when there is no direct link the reduction of SNR due to quatization bits is to about 3.9 (0.9) dB for a 1-bit (2-bit) quantizer compared to the optimal case while a  3-bit quantizer is close to the perfect case with only a small SNR degradation. It can be inferred that utilizing a 2-bit quantizer leads to a good trade-off between RIS complexity, signaling overhead and performance.

	\begin{figure}[t]
		\centering
		\includegraphics[trim = 8.5cm 8.5cm 8.5cm 8.5cm,scale=0.59]{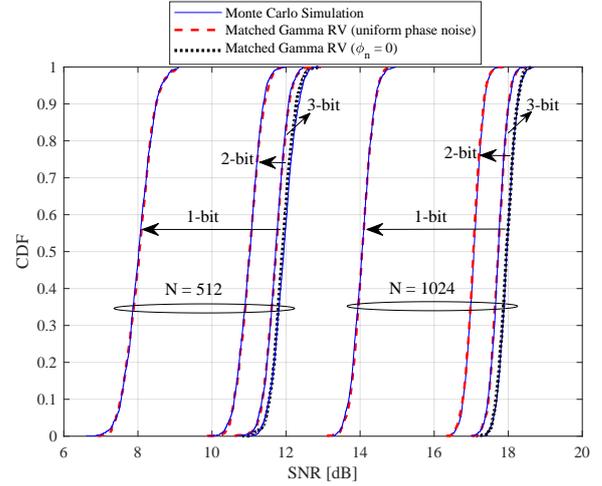}
		\caption{SNR CDFs without direct link.}
		\label{fig:SNR_CDFa}
	\end{figure}
	
	\begin{figure}[t]
		\centering    
		\includegraphics[trim = 8.5cm 8.5cm 8.5cm 8.5cm,scale=0.59]{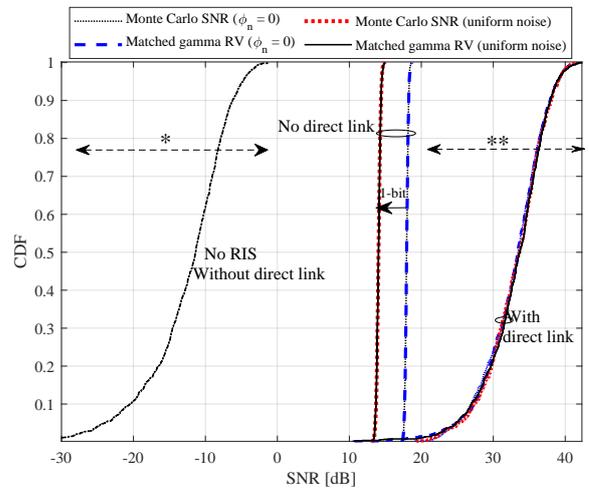}
		\caption{SNR CDFs with/without direct channel ($N = 1024$).}
		\label{fig:SNR_CDFb}
	\end{figure}
    
    By evaluating the average achievable rate as well as taking into account the channel dispersion, a 2-bit quantizer is compared with a 1-bit quantizer in Fig. \ref{fig:ErgodicRate} along with a perfect phase alignment scenario. The results also confirm that Monte Carlo simulations approximate to the derived analytical expressions for average rate in Theorem \ref{ergodic_rate_theo}.  As it is observed, when the number of bits assigned to each discrete phase at the RIS increments, the average rate curve is very close to the perfect phase alignment case. This shows that to achieve satisfactory accuracy, a few numbers of available bits will be sufficient  instead of high precision and high complexity quantizers. In addition, the performance gap between Shannon capacity and average rate in FBL regime is increased in terms of the number of RIS elements. However, the gap has reached to a fixed value in higher elements. This is because the  channel dispersion is saturated as the number of elements increase which results in higher SNR and $\lim_{\gamma \rightarrow \infty}\text{V}(\gamma) = (\log_2(e))^2$. 
    
   	\begin{figure}[t]
		\centering
		\includegraphics[trim = 8.5cm 8.5cm 8.5cm 8.5cm,scale=0.6]{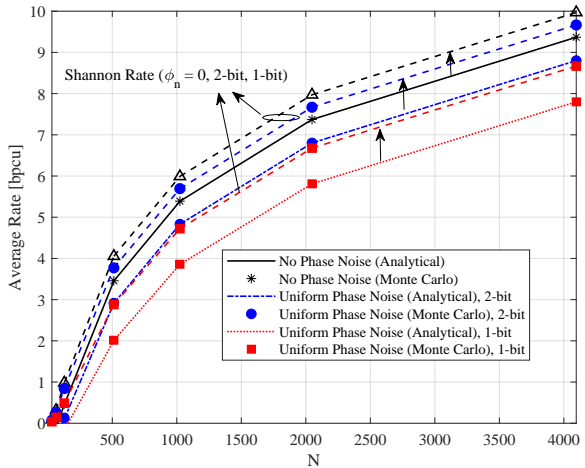}
		\caption{The average rate in terms of total RIS elements without presence of the direct channel.}
		\label{fig:ErgodicRate}
    \end{figure}

    Assuming the RIS is located at ($d$, 10) on the 2D plane  the average rate is illustrated versus $d \in [5,95]$ in Fig. \ref{fig:Ergodic_rate_vs_distance_N4096} with/without the presence of direct channel between AP and AC for $N=4096$. In contrast to the relaying schemes where locating in the middle of transmitter and the receiver is the best choice to maximize the performance, we can see that the average rate is maximized when the RIS is either close to the AP or the AC. Additionally assuming that we have direct link, we observe that the rate is varied from $9.9$ ($9.8$) bpcu to $11.25$ ($10.85$) bpcu when the RIS is re-located from the middle to the vicinity of AP or AC. This shows that the direct channel has a significant impact on the average rate as changing the RIS location has less impact on the average rate in the presence of direct link. 
   	\begin{figure}[t]
		\centering
		\includegraphics[trim = 8.5cm 8.5cm 8.5cm 8.5cm,scale=0.6]{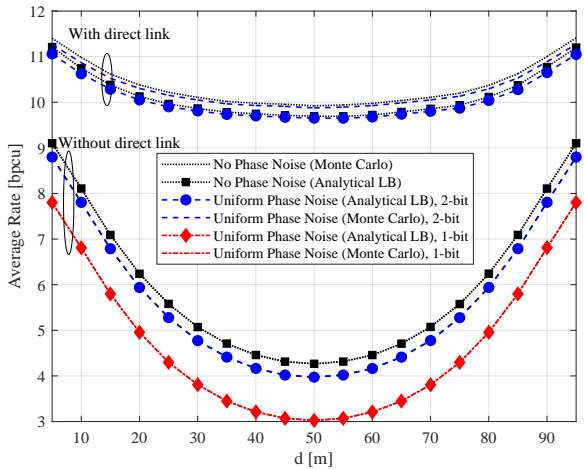}
		\caption{The impact of changing $d \in [5,95]$, where RIS is located at ($d$, 10) on the 2D plane on average rate in FBL regime.}
		\label{fig:Ergodic_rate_vs_distance_N4096}
    \end{figure}

    The average rate performance without direct channel is also  illustrated in Fig. \ref{fig:Ergodic_rate_vs_distance_N4096} where as shown in the curves there is a perfect match between the analytical expressions and the Monte Carlo simulations. The number of RIS elements is assumed the same as when direct channel exists ($N=4096$) to compare the results. Furthermore, the impact of phase error on the average rate is shown for 1-bit and 2-bit quantizers.  To have a similar analysis of rate variation we see that when $d=50$ is changed to $d=95$ or $d=5$ the average rate is increased from $4$ ($3$) bpcu to $9$ ($8$) bpcu for 2-bit (1-bit) quantizer. This shows a significant impact of changing the location of the RIS on the received SNR which affects the average rate compared to when direct channel exists particularly in factory automation scenarios.


    \section{Conclusion}
    In this paper, we have analysed the FBL regime performance of an RIS-aided communication link in a factory automation scenario. In particular, we have presented the analytical derivation of the achievable average rate and evaluated the impact of the quantization error. First, the received SNR is approximately matched to a Gamma RV whose parameters are derived in terms of total RIS elements and channels' path loss. Then, the average achievable rate is derived in terms of proposed SNR distribution. The numerical results have shown that the RIS can be effectively employed in factory automation environment for URLLC applications to enhance the reliability and improve the average achievable rate.
    
	\bibliographystyle{IEEEtran}
	\bibliography{IEEEabrv,refs} 
	
\end{document}